\newtheorem{theorem}{Theorem}
\newlength{\blank}
\newenvironment{proof}[1][{\hspace{-\blank}}]{{\noindent\textbf{Proof~{#1}.\ }}}{\hfill\qed}
\newcommand{\ket}[1]{|#1\rangle}
\newcommand{\bra}[1]{\langle#1|}
\mathchardef\ordinarycolon\mathcode`\:
\def\vcentcolon{\mathrel{\mathop\ordinarycolon}}
\newcommand{\nc}{\newcommand}
\nc{\rnc}{\renewcommand}
\nc{\beq}{\begin{equation}}
\nc{\eeq}{{\end{equation}}}
\nc{\beqa}{\begin{eqnarray}}
\nc{\eeqa}{\end{eqnarray}}
\nc{\lbar}[1]{\overline{#1}}
\nc{\ketbra}[2]{|#1\rangle\!\langle#2|}
\nc{\proj}[1]{| #1\rangle\!\langle #1 |}
\nc{\avg}[1]{\langle#1\rangle}
\nc{\Rank}{\operatorname{Rank}}
\nc{\smfrac}[2]{\mbox{$\frac{#1}{#2}$}}
\nc{\tr}{\operatorname{Tr}}
\nc{\ox}{\otimes}
\nc{\dg}{\dagger}
\nc{\dn}{\downarrow}
\nc{\cA}{\mathcal{A}}
\nc{\cB}{\mathcal{B}}
\nc{\cC}{\mathcal{C}}
\nc{\cD}{\mathcal{D}}
\nc{\cE}{\mathcal{E}}
\nc{\cF}{\mathcal{F}}
\nc{\cG}{\mathcal{G}}
\nc{\cH}{\mathcal{H}}
\nc{\cI}{\mathcal{I}}
\nc{\cJ}{\mathcal{J}}
\nc{\cK}{\mathcal{K}}
\nc{\cL}{\mathcal{L}}
\nc{\cM}{\mathcal{M}}
\nc{\cN}{\mathcal{N}}
\nc{\cO}{\mathcal{O}}
\nc{\cP}{\mathcal{P}}
\nc{\cR}{\mathcal{R}}
\nc{\cS}{\mathcal{S}}
\nc{\cT}{\mathcal{T}}
\nc{\cX}{\mathcal{X}}
\nc{\cZ}{\mathcal{Z}}
\nc{\csupp}{{\operatorname{csupp}}}
\nc{\qsupp}{{\operatorname{qsupp}}}
\nc{\var}{\operatorname{var}}
\nc{\rar}{\rightarrow}
\nc{\lrar}{\longrightarrow}
\nc{\polylog}{\operatorname{polylog}}
\nc{\id}{{\operatorname{id}}}
\nc{\RR}{{{\mathbb R}}}
\nc{\CC}{{{\mathbb C}}}
\nc{\FF}{{{\mathbb F}}}
\nc{\NN}{{{\mathbb N}}}
\nc{\ZZ}{{{\mathbb Z}}}
\nc{\PP}{{{\mathbb P}}}
\nc{\QQ}{{{\mathbb Q}}}
\nc{\UU}{{{\mathbb U}}}
\nc{\EE}{{{\mathbb E}}}
\nc{\qed}{{$\hfill\Box$}}
\begin{document}

\title{Interferometric visibility and coherence}

\author{Tanmoy Biswas}
\email{tanmoy.biswas23@gmail.com}
\affiliation{Department of Physical Sciences, IISER Kolkata, Mohanpur 741246, West Bengal, India}

\author{Mar\'ia Garc\'ia D\'iaz}
\email{maria.garciadia@e-campus.uab.cat}
\affiliation{Departament de F\'{\i}sica: Grup d'Informaci\'{o} Qu\`{a}ntica, %
Universitat Aut\`{o}noma de Barcelona, ES-08193 Bellaterra (Barcelona), Spain}

\author{Andreas Winter}
\email{andreas.winter@uab.cat}
\affiliation{Departament de F\'{\i}sica: Grup d'Informaci\'{o} Qu\`{a}ntica, %
Universitat Aut\`{o}noma de Barcelona, ES-08193 Bellaterra (Barcelona), Spain}
\affiliation{ICREA---Instituci\'{o} Catalana de Recerca i Estudis Avan\c{c}ats, %
             Pg.~Lluis Companys, 23, ES-08001 Barcelona, Spain}

\date{20 June 2017}

\keywords{Quantum coherence, interferometer, visibility}

\begin{abstract}
Recently, the basic concept of quantum coherence (or superposition)
has gained a lot of renewed attention, 
after Baumgratz \emph{et al.} [PRL 113:140401 (2014)],
following \AA{}berg [arXiv:quant-ph/0612146], have proposed a resource
theoretic approach to quantify it. This has resulted in a large number of
papers and preprints exploring various coherence monotones, and debating
possible forms for the resource theory. 
Here we take the view that the operational foundation of coherence in
a state, be it quantum or otherwise wave mechanical, lies in the observation 
of interference effects. 

Our approach here is to consider an idealised
multi-path interferometer, with a suitable detector, in such a way that the
visibility of the interference pattern provides a quantitative expression
of the amount of coherence in a given probe state.
We present a general framework of deriving coherence measures 
from visibility, and demonstrate it by analysing several concrete visibility
parameters, recovering some known coherence measures and obtaining some new ones.
\end{abstract}

\maketitle

\textbf{Introduction.}---%
The physics of constructive and destructive interference of waves, along with the
concept of coherence, has been well-understood since the 19th century. 
With the advent of quantum mechanics, these studies have assumed 
a fundamental quality as in quantum theory the superposition principle
applies to everything, and the presence of quantum
coherence is the basic hallmark of departure from classical physics.
Recently, Baumgratz \emph{et al.}~\cite{BCP14}, following \AA{}berg's
earlier work \cite{Aaberg2006}, have launched a flurry of new activity on 
coherence by attempting to cast it as a resource theory and introducing
a number of tasks and monotones \cite{coherence-review,MarvianSpekkens-review}.

It has, however, remained largely unclear what this resource of ``coherence'' 
is about and how it relates to theories of asymmetry, among others \cite{MarvianSpekkens-review}.
To make contact with the operational foundations of coherence, we go back
to its very definition, the observability of an interference pattern in
a suitable experiment.
Our present approach is to consider an idealised multi-path interferometer,
which receives the state $\rho$ under consideration at the input.
The experimenter is at liberty to put phase plates into each
of the paths, and to construct a detector (a general beam splitter with 
detection of the output beams). The interference pattern, i.e. the response
of the fixed detector as a function of the multiple phases, is the signature
of coherence: the more it fluctuates, intuitively, the more coherent
is the state. The degree of fluctuation, aka 
\emph{visibility}, quantifies the strength of interference.

The idea in the present paper is that coherence is the potential
of a state to yield visible fringes in a suitable experiment. 
Hence, we propose to optimise the visibility over all possible
detectors, to obtain a measure of coherence of the original state.
Indeed, we prove that under mild assumptions, every visibility parameter
yields a coherence measure in this way, strongly monotonic 
under a certain class of incoherence-preserving operations. 
We illustrate our theory with concrete examples of visibility parameters.


\bigskip
\textbf{Interferometers and visibility.}---%
Consider a multi-path interferometer,
in which a single particle can be in one of $d$ paths, denoting
the spatial variable by orthogonal vectors $\ket{j}$, $j=1,\ldots,d$,
spanning a $d$-dimensional Hilbert space $\cH$.
\begin{figure}[ht]
  \includegraphics[width=0.48\textwidth]{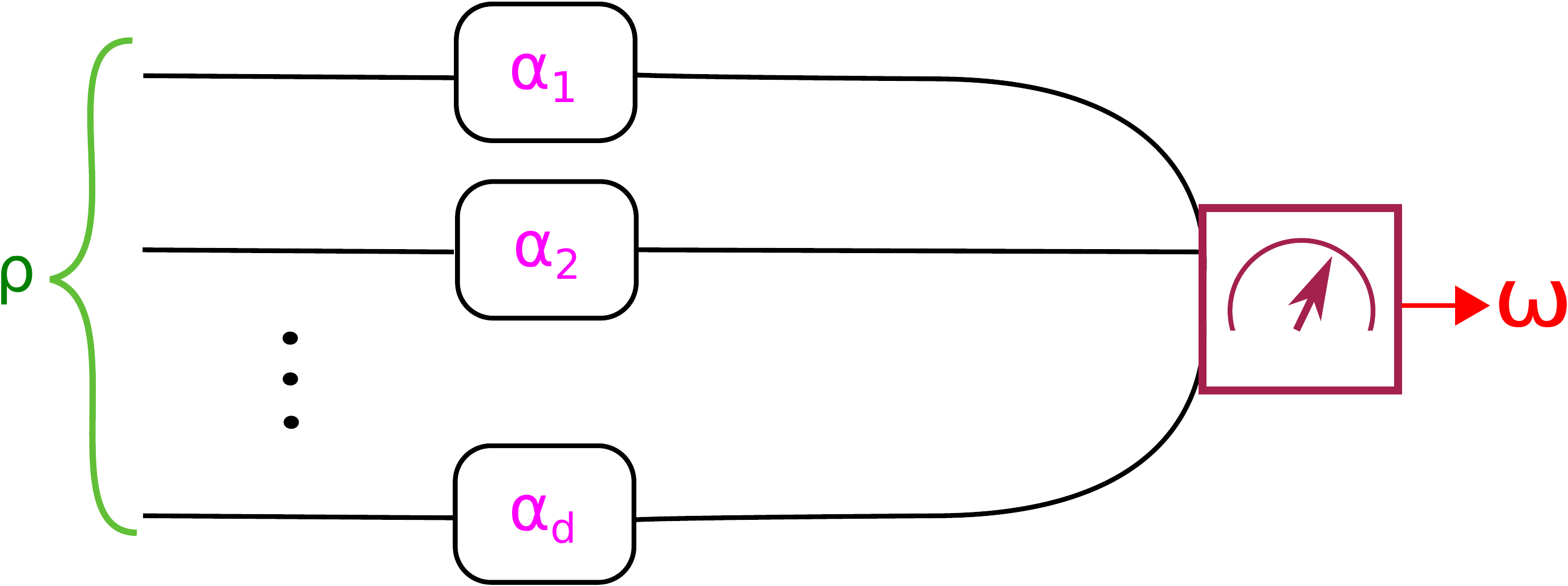}
  \caption{Schematic of a multi-path interferometer: On the left
           the input state \textcolor{OliveGreen}{$\rho$} (in green), representing the state
           of a particle on $d$ paths. Each path can be subjected to 
           a local relative phase \textcolor{magenta}{$\alpha_j$} (in magenta), after which
           the particle is detected via a suitable measurement (in violet)
           that produces an outcome \textcolor{red}{$\omega$} (in red).}
  \label{fig:interferometer}
\end{figure}
For the moment, we will ignore any internal degrees of freedom of the
particle, and any other spatial degrees,
so that the entire Hilbert space describing the system is $\cH$, and
a pure state inside the interferometer can be written as 
$\ket{\psi} = \sum_j c_j \ket{j}$, and a general mixed state as
\begin{equation}
  \rho = \sum_{j,k=1}^d \rho_{jk} \ketbra{j}{k}.
\end{equation}
An interferometric experiment (Fig. \ref{fig:interferometer}) has two distinct components. 
The first consists of local phase shifts $\alpha_j$ that can be inserted
into the paths, implementing a diagonal phase unitary
\begin{equation}
  U(\vec{\alpha}) = \sum_j e^{i\alpha_j} \proj{j},
\end{equation}
so that the state becomes
\begin{equation}
  \rho(\vec{\alpha}) = U(\vec{\alpha}) \rho U(\vec{\alpha})^\dagger
                     = \sum_{j,k=1}^d e^{i(\alpha_j-\alpha_k)}\rho_{jk} \ketbra{j}{k}.
\end{equation}
The second is a detector at the output, often simply fixed as the
combination of a symmetric beam splitter with a path measurement,
but for us a general POVM $M=(M_\omega)$, with outcomes $\omega$
from a suitable space $\Omega$.

The experimenter, having chosen $\vec{\alpha} = (\alpha_1,\ldots,\alpha_d)$,
will observe outcomes $\omega\in\Omega$ sampled from the Born distribution
(the ``interference pattern''):
\begin{equation}
  \label{eq:response}
  P_{M|\rho}(\omega|\vec{\alpha}) = \tr U(\vec{\alpha}) \rho U(\vec{\alpha})^\dagger M_\omega.
\end{equation}
The signature of interference in such an experiment, where $\rho$
is given and fixed, is that the distribution $P=P_{M|\rho}$ can vary as
a function of the phases $\alpha_j$. The degree of variability,
intuitively, is the visibility of the interference pattern, calling
for a visibility functional $V=V[P]$ on conditional 
distributions $P(\omega|\vec{\alpha})$.

While it is always dangerous to make a priori demands, we take it that 
such a functional has to capture the global property of
$P$ not being constant. I.e., it should be $0$ for constant 
$P(\cdot|\vec{\alpha})$ and positive otherwise. 
It will also make sense to ask that it is 
invariant under permutations and shifts of the $\vec{\alpha}$, reflecting the
obvious symmetries of the experimental setup (see Fig.~\ref{fig:interferometer}).
We call a visibility functional $V[P]$ satisfying these requirements
\emph{regular}.
In the discussion of interferometers, specifically in the rich literature on
the complementary between fringe visibility and which-path information
\cite{WoottersZurek,GreenbergerYasin,Englert,Jaeger-et-al}, 
to cite only the principal ones, the topic of visibility has been
addressed repeatedly, and from increasingly general perspectives.
In particular, the realisation that for $d > 2$ no unique visibility functional 
seems to exist, called for an axiomatic approach to put order into the many ad hoc
parameters; cf.~\cite{Duerr,EKKC,Prillwitz-et-al}. We wish to highlight especially 
Coles' paper~\cite{Coles:coherent}, which makes an eloquent case for an \emph{operational} 
approach, where visibility (as well as which-path information) is expressed as 
a property of an observable probability distribution, and with which philosophy
we feel very much in line.

\medskip
In the simplest case of the well-known Mach-Zehnder interferometer (Fig.~\ref{fig:MZ}),
i.e.~$d=2$, we observe interference fringes w.r.t.~a relative phase shift:
Consider the density matrix
$\rho = \rho_{11} \proj{1} + \rho_{22} \proj{2} 
         + \rho_{12} \ketbra{1}{2} + \rho_{21} \ketbra{2}{1}$,
the diagonal phase unitary
$U(\vec{\alpha}) = e^{i\alpha_1} \proj{1} + e^{i\alpha_2} \proj{2}$,
and a measurement with POVM elements $\proj{\mu}$, 
$\ket{\mu} = \mu_1\ket{1}+\mu_2\ket{2}$.
Now, with $\alpha = \alpha_1-\alpha_2$, and writing
$\rho_{12}=|\rho_{12}|e^{i\beta}$, $\overline{\mu_1}\mu_2 = |\mu_1\mu_2|e^{i\gamma}$, 
the output probability is 
\begin{equation}\begin{split}
  P_{M|\rho}(\mu|{\vec{\alpha}})
     &= \rho_{11}|\mu_{1}|^2 + \rho_{22}|\mu_{2}|^2 \\
     &\phantom{===}
        + 2|\rho_{12}\mu_1\mu_2|\cos(\alpha+\beta+\gamma),
\end{split}\end{equation}
whose fluctuation is essentially characterised by 
the coefficient $|\rho_{12}\mu_1\mu_2|$, and so most
analyses conclude this to be the visibility.

\begin{figure}[ht]
	\includegraphics[width=0.48\textwidth]{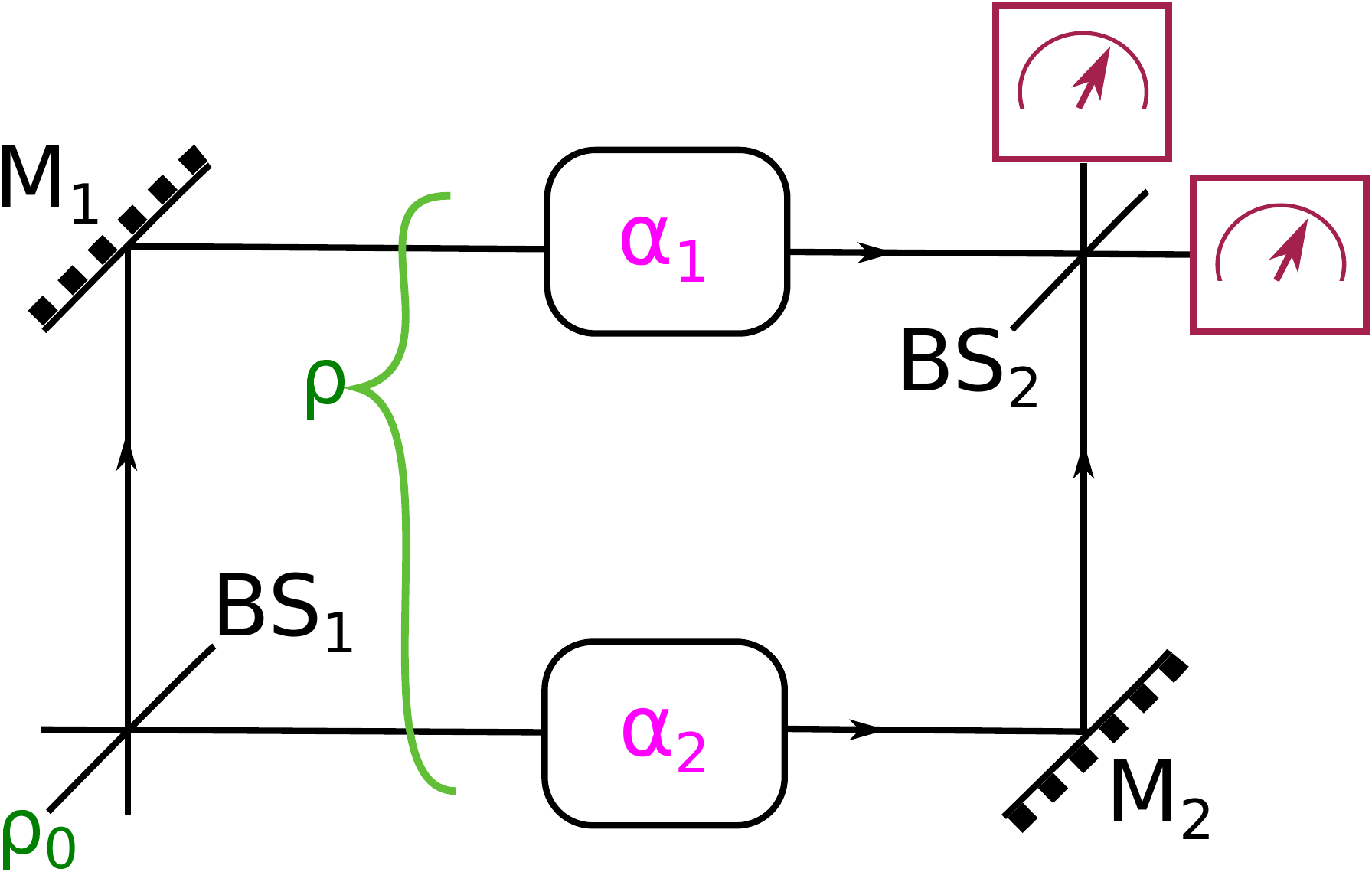}
	\caption{Mach-Zehnder two-path interferometer. The initial state 
	         is $\rho_0$, while the state after the interaction with the first beam splitter (BS1) 
	         and the first mirror (M1) is \textcolor{green}{$\rho$} (in green), 
	         representing the state of a particle on two paths. Each path can be 
	         subjected to a phase {$\alpha_1$} and {$\alpha_2$}, although only their
	         relative difference $\alpha = \alpha_1-\alpha_2$ is physically relevant. 
	         After the interaction with 
	         the second beam splitter (BS2), the interference pattern is observed.}
	\label{fig:MZ}
\end{figure}


\bigskip
\textbf{Optimal visibility as measure of coherence.}---%
If we want to treat the state $\rho$ as a resource, i.e.~as a given,
of which we are supposed to make the best, it makes sense to optimise
the visibility $V[P_{M|\rho}]$ over all possible measurements. 
This idea diverges somewhat from lab practice in interferometry and from
many discussion of visibility vs.~which-path information duality, where 
a fixed measurement is used, usually one mixing the paths uniformly
(a $50$-$50$ beam splitter in the Mach-Zehnder case, and more
generally a transformation acting as a Fourier transform, followed
by a channel detector)~\cite{Englert,EKKC,BimonteMusto,KszliZuk}. 

However, it is clear that the most beautiful coherent superposition
in the state may be rendered invisible by an unsuitable choice of measurement
For instance, consider the qutrit state
\(
  \rho = \frac13\left(\ket{1}\!+\!\ket{2}\right)\!\left(\bra{1}\!+\!\bra{2}\right) + \frac13 \proj{3},
\)
under a measurement $M^{(0)}$ in the basis 
$\left\{ \ket{1}, \frac{1}{\sqrt{2}}(\ket{2}\pm\ket{3}) \right\}$;
evidently, the three outcomes all have probability $\frac13$, 
irrespective of the phases in
\[
  \rho(\alpha)
    = \frac13 \left(\ket{1}\!+\!e^{i\alpha}\ket{2}\right)\!\left(\bra{1}\!+\!e^{-i\alpha}\bra{2}\right)
       +\frac13 \proj{3}.
\]
Intuitively, we expect the best choice to bring out the coherence
in $\rho$ to be the projective measurement $M^{(1)}$ in the basis
$\left\{ \frac{1}{\sqrt{2}}(\ket{1}\pm\ket{2}), \ket{3} \right\}$,
for which the detection probabilities are
$(\frac23 \cos^2\frac{\alpha}{2}, \frac23 \sin^2\frac{\alpha}{2}, \frac13)$.
On the other hand, the standard choice of a symmetric beam splitter 
results in the Fourier basis 
$\left\{ \frac{1}{\sqrt{3}}(\ket{1}+\zeta^t\ket{2}+\zeta^{2t}\ket{3}) : t=0,1,2 \right\}$,
where $\zeta = e^{2\pi i/3}$, with detection probabilities
\[
  \Bigl(\frac19 + \frac49 \cos^2(\frac{\alpha}{2}-\frac{t\pi}{3}) : t=0,1,2 \Bigr),
\]
which has the same oscillation pattern as $M^{(1)}$, but smaller
amplitude.

Thus we are motivated, given a visibility functional $V[P]$, to optimise
the visibility over all measurements, to get the best out of $\rho$.
This leads to a number that now depends only on the state,
\begin{equation}
  \label{eq:CV}
  C_V(\rho) := \sup_{M=(M_\omega)} V[P_{M|\rho}].
\end{equation}

The hypothesis that we will explore in the rest of the paper is that
this number, for a large class of visibility functionals, is a good
indicator of coherence in $\rho$.

In an attempt to identify consistent quantifiers
of coherent superposition, Baumgratz \emph{et al.}~\cite{BCP14}, following
\AA{}berg~\cite{Aaberg2006}, have created a resource theory of
coherence, with carefully chosen resource free state (diagonal
density operators $\Delta$) and free transformations (the so-called \emph{incoherent
operations (IO)}). A measure that is non-increasing under these operations
is called a \emph{monotone}.
Mathematically, incoherent operations are completely positive
and trace preserving linear maps, built from \emph{incoherent Kraus operators},
\(
  T(\rho) = \sum_\lambda K_\lambda \rho K_\lambda^\dagger,
\)
where $K_\lambda\ket{j} \propto \ket{k} = \ket{k(\lambda,j)}$ for all computational
basis elements $\ket{j}$. If $K_\lambda$ and $K_\lambda^\dagger$ are both
incoherent, it is called \emph{strictly incoherent (SIO)}, and so is a map $T$ built
from strictly incoherent Kraus operators. From this is it straightforward to
see that a Kraus operator is incoherent if and only if it has the form 
$K = \sum_{j} c_j \ketbra{k(j)}{j}$ with a function $k(j)$ mapping basis
states to basis states; it is strictly incoherent if and only if $k$
is one-to-one.

A functional $C(\rho)\geq 0$ is a monotone if $C(\rho) \geq C(T(\rho))$
under all IO (SIO) $T$. It is called a \emph{strong monotone} if for an incoherent
Kraus decomposition, $K_\lambda \rho K_\lambda^\dagger =: q_\lambda \rho_\lambda$,
$C(\rho) \geq \sum_\lambda q_\lambda C(\rho_\lambda)$. Well-known examples
include the $\ell_1$-measure of coherence \cite{BCP14}, and the relative entropy
of coherence \cite{Aaberg2006,BCP14,WinterYang}:
\begin{align}
  C_{\ell_1}(\rho) &= \sum_{j\neq k} |\rho_{jk}|, \\
  C_r(\rho)        &= \min_{\sigma\in\Delta} D(\rho\|\sigma) = S(\Delta(\rho))-S(\rho),
\end{align}
with the relative entropy $D(\rho\|\sigma) = \tr\rho(\log\rho-\log\sigma)$ 
and the von Neumann entropy $S(\rho) = -\tr\rho\log\rho$;
$\Delta(\rho) = \sum_j \proj{j} \rho \proj{j}$ is the diagonal part of $\rho$.

Our first result is a general link between visibility and coherence.
We call a visibility functional \emph{weakly affine} if for distributions
$P_i(\omega|\vec{\alpha})$, $\omega \in \Omega_i$ (assuming 
w.l.o.g.~pairwise disjoint $\Omega_i$), and for a probability
distribution $(q_i)$, we have
$V\left[\overline{P}\right] = \sum_i q_i V[P_i]$, with the
averaged distribution $\overline{P} = \sum_i q_i P_i$ on 
$\Omega = \bigcup_i \Omega_i$. 

\begin{theorem}
  \label{thm:C_V:monotone}
  For any regular and weakly affine visibility functional $V[P]$,
  $C_V$ is a coherence measure that is strongly monotonic under
  SIO (strictly incoherent operations).
  If $V$ is convex in $P$, then $C_V$ is convex in $\rho$.
\end{theorem}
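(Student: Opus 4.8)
The plan is to verify in turn the three properties that make up the statement: faithfulness (so that $C_V$ is a genuine coherence measure), strong monotonicity under SIO, and—under the extra hypothesis—convexity. Non-negativity of $C_V$ is immediate from $V\geq 0$. For the vanishing condition, I would first observe that if $\rho=\Delta(\rho)$ is diagonal then it commutes with every phase unitary $U(\vec\alpha)$, so $P_{M|\rho}(\omega|\vec\alpha)=\tr\rho M_\omega$ does not depend on $\vec\alpha$; being constant, regularity forces $V[P_{M|\rho}]=0$ for all $M$ and hence $C_V(\rho)=0$. Conversely, if $\rho_{jk}\neq 0$ for some $j\neq k$, I would put the projector onto $\frac{1}{\sqrt2}(\ket j+\ket k)$ into a POVM and compute the corresponding outcome probability: it contains the term $|\rho_{jk}|\cos(\alpha_j-\alpha_k+\arg\rho_{jk})$, so the pattern is non-constant and regularity gives $V>0$, whence $C_V(\rho)>0$. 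This establishes that $C_V(\rho)=0$ if and only if $\rho$ is incoherent.

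The heart of the argument is strong monotonicity. Fix an SIO $T$ with strictly incoherent Kraus operators $K_\lambda=\sum_j c^{(\lambda)}_j\ketbra{k_\lambda(j)}{j}$, with $k_\lambda$ injective, and write $K_\lambda\rho K_\lambda^\dagger=q_\lambda\rho_\lambda$. Given an arbitrary POVM $M^{(\lambda)}=(M^{(\lambda)}_\omega)$ for each branch, I would assemble a single POVM on the input with outcomes $(\lambda,\omega)$ and elements $M_{(\lambda,\omega)}=K_\lambda^\dagger M^{(\lambda)}_\omega K_\lambda$; trace preservation $\sum_\lambda K_\lambda^\dagger K_\lambda=\1$ guarantees $\sum_{\lambda,\omega}M_{(\lambda,\omega)}=\1$, so this is legitimate. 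The decisive structural fact, and the one place where strict incoherence is indispensable, is that injectivity of $k_\lambda$ lets me solve $(\vec\gamma_\lambda)_{k_\lambda(j)}=\alpha_j$ and so pull the phases through the Kraus operator, $K_\lambda U(\vec\alpha)=U(\vec\gamma_\lambda)K_\lambda$, where $\vec\gamma_\lambda$ is obtained from $\vec\alpha$ by the coordinate permutation extending $k_\lambda$. For a merely incoherent, non-injective $k_\lambda$ this equation has no solution, which is precisely why the theorem is restricted to SIO.

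Using cyclicity of the trace together with this commutation relation, the interference pattern factorises: $P_{M|\rho}((\lambda,\omega)|\vec\alpha)=\tr\bigl[K_\lambda U(\vec\alpha)\rho U(\vec\alpha)^\dagger K_\lambda^\dagger M^{(\lambda)}_\omega\bigr]=q_\lambda P_{M^{(\lambda)}|\rho_\lambda}(\omega|\vec\gamma_\lambda)$. Since the outcome sets for distinct $\lambda$ are disjoint, this exhibits $P_{M|\rho}$ as the $q_\lambda$-average of the branch patterns, so weak affinity yields $V[P_{M|\rho}]=\sum_\lambda q_\lambda V[P_{M^{(\lambda)}|\rho_\lambda}(\cdot|\vec\gamma_\lambda)]$. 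Because $\vec\gamma_\lambda$ is a permutation of the phase coordinates, permutation invariance (regularity) removes it and gives $V[P_{M|\rho}]=\sum_\lambda q_\lambda V[P_{M^{(\lambda)}|\rho_\lambda}]$. Since $C_V(\rho)\geq V[P_{M|\rho}]$ holds for every choice of the independent measurements $M^{(\lambda)}$, taking the supremum branch by branch gives $C_V(\rho)\geq\sum_\lambda q_\lambda C_V(\rho_\lambda)$, which is strong monotonicity.

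Convexity is comparatively painless: for $\rho=\sum_i p_i\rho_i$ and any fixed $M$, linearity of the Born rule gives $P_{M|\rho}=\sum_i p_i P_{M|\rho_i}$ on a common outcome space, so convexity of $V$ yields $V[P_{M|\rho}]\leq\sum_i p_i V[P_{M|\rho_i}]\leq\sum_i p_i C_V(\rho_i)$, and taking the supremum over $M$ proves $C_V(\rho)\leq\sum_i p_i C_V(\rho_i)$. I expect the main obstacle to lie in the permutation step above: when $k_\lambda$ fails to be onto, $\vec\gamma_\lambda$ is only a relabelling of a subset of the coordinates, and one must check that the phases outside the support of $\rho_\lambda$ act trivially on $P_{M^{(\lambda)}|\rho_\lambda}$, so that $k_\lambda$ may be extended to a genuine permutation without affecting the pattern, before invoking invariance. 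Carrying out this support bookkeeping cleanly, alongside the attainment subtleties hidden in the suprema, is where the care will be needed.
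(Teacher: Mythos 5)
Your proof is correct and follows essentially the same route as the paper's: the same composite POVM $\widetilde M_{(\lambda,\omega)}=K_\lambda^\dagger M^{(\lambda)}_\omega K_\lambda$, the same commutation of $U(\vec\alpha)$ through the strictly incoherent Kraus operator via a coordinate permutation, and the same appeal to weak affinity plus permutation invariance, with an identical convexity argument. Your additional verification of faithfulness (vanishing exactly on diagonal states) and your remark on extending $k_\lambda$ to a full permutation are sensible supplements that the paper leaves implicit, but they do not change the approach.
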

\begin{proof}
Let a SIO with Kraus operators $K_\lambda$ be given, acting on a
state $\rho$, so that $q_\lambda \rho_\lambda = K_\lambda \rho K_\lambda^\dagger$
defines the probability of the event $\lambda$ and the post-measurement state.
Observe that, because $K_\lambda = \pi_\lambda D_\lambda$ can be written as a 
diagonal matrix $D_\lambda$ followed by a permutation $\pi_\lambda$,
\[\begin{split}
  q_\lambda U(\vec{\alpha}) \rho_\lambda U(\vec{\alpha})^\dagger
      &= U(\vec{\alpha}) K_\lambda \rho K_\lambda^\dagger U(\vec{\alpha})^\dagger \\
      &= K_\lambda U(\vec{\beta}) \rho U(\vec{\beta})^\dagger K_\lambda^\dagger,
\end{split}\]
with $\beta_j = \alpha_{\pi_\lambda(j)}$. This shows that the probability
of seeing outcome $\lambda$ is $q_\lambda$ for all $\rho(\vec{\alpha})$.

Now choose measurements $M^{(\lambda)}$ for each $\rho_\lambda$,
taking values $\omega$ in the disjoint sets $\Omega_\lambda$, subject to the 
probability law $P_\lambda = P_{M^{(\lambda)}|\rho_\lambda}$ given by
\[\begin{split}
  P_\lambda(\omega|\vec{\alpha}) 
     &= \tr U(\vec{\alpha}) \rho_\lambda U(\vec{\alpha})^\dagger M^{(\lambda)}_\omega \\
     &= \frac{1}{q_\lambda} 
         \tr U(\vec{\alpha}) K_\lambda \rho K_\lambda^\dagger U(\vec{\alpha})^\dagger M^{(\lambda)}_\omega \\
     &= \frac{1}{q_\lambda}
         \tr K_\lambda U(\vec{\beta}) \rho U(\vec{\beta})^\dagger K_\lambda^\dagger M^{(\lambda)}_\omega \\
     &= \frac{1}{q_\lambda}
         \tr U(\vec{\beta}) \rho U(\vec{\beta})^\dagger K_\lambda^\dagger M^{(\lambda)}_\omega K_\lambda.
\end{split}\]
Introducing the POVM 
$\widetilde{M} = (K_\lambda^\dagger M^{(\lambda)}_\omega K_\lambda)_{\lambda,\omega}$
with outcomes $(\lambda,\omega)$, we can now invoke weak affinity:
\[
  \sum_\lambda q_\lambda V[P_\lambda] =    V\!\left[ \sum_\lambda q_\lambda P_\lambda \right] 
                                      =    V[P_{\widetilde{M}|\rho}] 
                                      \leq C_V(\rho),
\]
because the measurement $\widetilde{M}$ is eligible for $\rho$ but may be
suboptimal. Since the measurements $M^{(\lambda)}$ can be chosen to maximise 
the left hand side, we obtain 
\[
  \sum_\lambda q_\lambda C_V(\rho_\lambda) \leq C_V(\rho).
\]

For the convexity statement, let $\rho = \sum_i p_i \sigma_i$ and choose any
measurement $M$ on $\rho$. Then,
\[
  V[P_{M|\rho}] =    V\!\left[ \sum_i p_i P_{M|\sigma_i} \right] 
                \leq \sum_i p_i V[P_{M|\sigma_i}] 
                 \leq \sum_i p_i C_V(\sigma_i),
\]
and because $M$ may be chosen to maximise the left hand side, we find
\(
  C_V\!\left(\sum_i p_i \sigma_i\right) \leq \sum_i p_i C_V(\sigma_i),
\)
as claimed.
\end{proof}

\medskip
A couple of remarks are in order: 
First, there don't seem to be easy conditions for $C_V$ to be a (strong) 
coherence monotone under IO, but of course that is something potentially checkable
in individual cases.
Secondly, one might wonder in case we merely want to \emph{detect} coherence, 
whether there is a universal measurement $M$ such that if $\rho$ has coherences, 
then $V[P_{M|\rho}]$ is positive. The answer is yes, namely any tomographically
complete measurement, as long as $V[P]$ has the property that it is non-zero
on every non-constant $P$.

\bigskip
\textbf{Examples.}---%
We now show that the above theory is not just an abstract
construction, by considering several concrete visibility parameters,
for which we can evaluate the associated coherence measures, or at least
considerably simplify the optimisation.

\medskip
\textit{A. Largest difference of intensity:}
The perhaps simplest and most intuitive parameter of visibility for
two-outcome measurements $M=(M_0,M_1=\1-M_0)$ is the difference between
the largest and the smallest value of 
$P_{M|\rho}(0|\vec{\alpha}) = \tr U(\vec{\alpha}) \rho U(\vec{\alpha})^\dagger M_0$.
To make it suitable for measurements with arbitrary outcome sets, we define
\begin{equation}
  V_{\max}[P] := \sup_{\vec{\alpha},\vec{\beta}} 
                   \frac12 \left\| P(\cdot|\vec{\alpha})-P(\cdot|\vec{\beta}) \right\|_1.
\end{equation}
Note that we do not normalise by the sum of the largest and smallest
probability, as is customary in discussions of visibility in classical 
interferometry, where the basic observable quantities are intensities.
There, this appears necessary to obtain a dimensionless visibility;
here however, we have the probabilities that are already dimensionless
and have an absolute meaning.

Clearly, $V_{\max}$ is regular and weakly affine, so the 
corresponding coherence measure $C_{\max}$ is a SIO monotone.
In fact, it is easy to evaluate it, and the result is
\begin{equation}\begin{split}
  \label{eq:C_max}
  C_{\max}(\rho) &= \max_{\vec{\alpha}} 
                      \frac12 \left\| U(\vec{\alpha}) \rho U(\vec{\alpha})^\dagger - \rho \right\|_1 \\
                 &= \max_{\vec{\alpha}} 
                      \frac12 \bigl\| [\rho, U(\vec{\alpha})] \bigr\|_1 \\
                 &= \max_{\vec{\alpha},\,M_0} 
                      \tr U(\vec{\alpha}) \rho U(\vec{\alpha})^\dagger M_0 - \tr \rho M_0,
\end{split}\end{equation}
because we can always shift $\vec{\beta}$ to $\vec{0}$ by applying $U(-\vec{\beta})$.
In particular, the optimal measurement is a two-outcome POVM $(M_0,M_1=\1-M_0)$, and
the value is the largest difference in response probability over POVM elements.

We can compare the result with the trace distance measure of coherence,
$C_{\tr}(\rho) = \min_{\sigma\in\Delta} \frac12 \| \rho-\sigma \|_1$,
introduced in~\cite{BCP14}:
\(
  C_{\tr}(\rho) \leq C_{\max}(\rho) \leq 2 C_{\tr}(\rho).
\)

Namely, on the one hand, for $\sigma \in \Delta$, we have
$\| \rho-\sigma \|_1 = \left\| U(\vec{\alpha})\rho U(\vec{\alpha})^\dagger - \sigma \right\|_1$,
so by the triangle inequality
\[
  \left\| U(\vec{\alpha})\rho U(\vec{\alpha})^\dagger - \rho \right\|_1
     \leq \left\| U(\vec{\alpha})\rho U(\vec{\alpha})^\dagger - \sigma \right\|_1
             + \| \rho - \sigma \|_1,
\]
which implies $C_{\max}(\rho) \leq 2 C_{\tr}(\rho)$. On the other hand,
\[\begin{split}
  C_{\tr}(\rho) &\leq \frac12 \bigl\| \rho-\Delta(\rho) \bigr\|_1 \\
                &= \frac12 \left\| \rho 
                         - \int \!{\rm d}\vec{\alpha}\, 
                                U(\vec{\alpha})\rho U(\vec{\alpha})^\dagger \right\|_1 \\
                &\leq \int \!{\rm d}\vec{\alpha}\, 
                           \frac12 \left\| U(\vec{\alpha})\rho U(\vec{\alpha})^\dagger - \rho \right\|_1 
                 \leq C_{\max}(\rho).
\end{split}\]

\medskip
In the qubit case, it holds that 
$C_{\max}(\rho) = 2|\rho_{01}| = C_{\ell_1}(\rho) = 2 C_{\tr}(\rho)$
(see Appendix).

\medskip
\textit{B. Estimating equidistributed phases:}
Inspired by the previous example, we are motivated to considering
guessing problems of a more general kind, where we are trying to
estimate the true setting of the phases among several alternatives,
based on measurement outcomes. It turns out that a good
candidate is the equidistributed set of $d$ phases $\frac{2\pi j}{d}(1,2,\ldots,d)$,
$j=1,\ldots,d$, and its shifts and permutations:
\begin{equation}
  V_{\text{guess}}[P] := -\frac1d + \max_{\substack{\vec{\alpha}_0,\,\pi\in S_d \\ 
                                          \Omega_j \cap \Omega_k = \emptyset}}
                                     \frac1d \sum_{j=1}^d P(\Omega_j|\vec{\alpha}_0+j\vec{h}_\pi),
\end{equation}
where $\vec{h}_\pi = \frac{2\pi}{d}\bigl(\pi(1),\pi(2),\ldots,\pi(d)\bigr)$
is a generating vector of uniformly accelerating phases (w.r.t.~the permutation
$\pi$ of coordinates).
This quantity is the bias (excess over $\frac1d$) of the optimal strategy
to guess the true value of $j \in \{1,\ldots,d\}$ that defines the phase
settings.
As defined, this visibility functional is regular and weakly affine,
so the corresponding $C_{\text{guess}}$ is a coherence monotone under SIO.
As a matter of fact, it holds \cite{robustness}
\begin{equation}
  C_{\text{guess}}(\rho) = -\frac1d + \max_{(M_j) \text{ POVM}}
                                       \frac1d \sum_{j=1}^d 
                                             \tr \rho(\vec{\alpha}_0+j\vec{h}_\pi) M_j 
                         = \frac1d C_R(\rho),
\end{equation}
for any $\vec{\alpha}_0$ and any permutation $\pi$. 
Here, $C_R$ denotes the \emph{robustness of coherence}, defined via
\begin{equation}
  1 + C_R(\rho) = \min \tr \delta \text{ s.t. } \delta \geq \rho,\ \delta\in\Delta,
\end{equation}
which is known to be an IO monotone \cite{robustness}.
Interestingly, by maximising the operational visibility proposed
in~\cite[Eq.~(20)]{Coles:coherent}, the same result is obtained \cite{Coles:unpub}.

\medskip
In the qubit case, it is well known that the robustness of coherence
equals the $\ell_1$-measure:
$C_R(\rho) = 2|\rho_{01}| = C_{\ell_1}(\rho)$ \cite{robustness},
and so $C_{\text{guess}}(\rho) = |\rho_{01}|$ is just half of that.

\medskip
\textit{C. Largest sensitivity to phase changes:}
Looking back at example A, we notice that the points of largest and
smallest value of the response probability
$I(\vec{\alpha}) = P_{M|\rho}(0|\vec{\alpha}) 
                 = \tr U(\vec{\alpha}) \rho U(\vec{\alpha})^\dagger M_0$
to a POVM element $M_0$ may be quite far apart. In contrast, 
in many applications of interferometry it is a relatively small phase difference
that we want to pick up \cite{LIGO}, 
so we are interested in the largest magnitude of the derivative of $I(\vec{\alpha})$:

\begin{equation}
  V_{\nabla}[P] := \max_{\vec{\alpha},\vec{h}} 
                      \left| \frac{\partial I}{\partial \vec{h}}(\vec{\alpha}) \right|,
\end{equation}
where $\vec{\alpha}$ ranges over all phases, and $\vec{h}$ over all direction
vectors that are suitably norm bounded. 
To extend $V_{\nabla}$ to general measurements, we may include
a maximisation over all two-outcome coarse grainings.
We can easily see that $V_{\nabla}[P]$ is regular and weakly affine since 
$I(\vec{\alpha})$ is well-defined probability distribution over $\vec{\alpha}$.

Now, as $I(\vec{\alpha}) = \tr \rho U(\vec{\alpha})^\dagger M_0 U(\vec{\alpha})$, 
its derivative at (w.l.o.g.) $\vec{0}$ in direction $\vec{h}$ is given by
\begin{equation}
  \label{eq:commutator}
  \frac{\partial I}{\partial \vec{h}}(\vec{0}) = - i \tr [\rho,H] M_0
                                               = - i \tr \rho [H,M_0],
\end{equation}
where $H$ is the diagonal Hamiltonian with eigenvalues $h_j$,
$H=\operatorname{diag}(\vec{h})$. 
Note that the derivative at any other point $\vec{\alpha}_0$ is the same, 
up to conjugating the measurement by $U(\vec{\alpha}_0)$.
There are two natural limitations on
$\vec{h}$: Geometrically, to obtain the largest gradient of $I$,
we should consider unit vectors $\vec{h}$, meaning $\| H \|_2^2 = \tr H^2 = 1$;
or taking motivation from the Hamiltonian, we should bound its energy 
range, meaning $\| H \|_\infty \leq 1$. We denote these two scenarios
by $p=2$ and $\infty$, giving rise to two coherence measures
$C^{(p)}_{\nabla}$.
From Eq.~(\ref{eq:commutator}), we directly get
\begin{equation}
  \label{eq:nabla}
  C^{(p)}_{\nabla}(\rho) 
        = \max \frac12 \bigl\| [\rho,H] \bigr\|_1 \text{ s.t. } H \text{ diag.}, \|H\|_p \leq 1.
\end{equation}
Inspecting this formula, we see that the
optimisation is convex in $H$, hence the maximum is attained on an
extremal admissible Hamiltonian. 
For $p=2$, these have the form
$H=\sum_j \epsilon_j \sqrt{t_j} \proj{j}$, with $\epsilon_j=\pm 1$
and $\sum_j t_j = 1$. 
For $p=\infty$, the extremal $H$ have entries $\pm 1$ along the diagonal, and so
\begin{equation}
  C^{(\infty)}_{\nabla}(\rho) 
        = \max_{S_+ \stackrel{.}{\cup} S_- = [d]} 2\, \| \Pi_+ \rho \Pi_- \|_1,
\end{equation}
where the maximisation over partitions $S_+ \stackrel{.}{\cup}S_- = [d]$, 
with $\Pi_{\bullet} = \sum_{j\in S_{\bullet}} \proj{j}$, $\bullet=\pm$.
In both cases, we obtain a strong SIO monotone, due to the evident 
weak affinity of $V_{\nabla}$. From Eq.~(\ref{eq:C_max}) we see that
$C^{(\infty)}_{\nabla} \leq C_{\max}$, but equality does
not seem to hold in general.

An alternate form of $C_{\nabla}$ can be obtained by
using Eq.~(\ref{eq:commutator}), and 
going to the more convenient variable $B=2M_0-\1$ in the above
equations. After a few manipulations, we arrive at 
\begin{equation}
  C^{(p)}_{\nabla}(\rho) = \max \tr \rho X \text{ s.t. } X \in \mathcal{C}_p,
\end{equation}
where the maximisation is over the set of Hermitian matrices
\[
  \mathcal{C}_p = \biggl\{ X=\frac{1}{2i}[H,B] : H=H^\dagger \text{ diagonal},\ 
                                \| H \|_p \leq 1,\, -\1 \leq B  \leq \1 \biggr\}.
\]
In this form it is formally a convex optimisation problem, because
we may as well go to the convex hull of $\mathcal{C}_p$. However,
its characterisation remains as a beautiful open problem.
Indeed, it is easy to see that the elements of $\mathcal{C}_p$ have zero diagonal
and satisfy $\| X \|_p = \bigl( \tr|X|^p \bigr)^{1/p} \leq 1$, but there may
be other constraints.

\medskip
Once again, the qubit case is very simple (see Appendix): 
$C^{(2)}_{\nabla}(\rho) = \sqrt{2}|\rho_{12}|$ and 
$C^{(\infty)}_{\nabla}(\rho) = 2|\rho_{12}| = C_{\ell_1}(\rho)$.

\medskip
\textit{D. Largest Fisher information:}
Considering further the previous example, we realise that finding the
largest derivative of the probability $P(0|\vec{\alpha})$, while 
strongly motivated by the intuition rooted in intensities, does not 
necessarily identify the point of strongest statistical sensitivity,
which is asking for the largest Fisher information, the natural measure
for probability distributions. Looking again at directional estimation 
of a one-dimensional subfamily $\vec{\alpha} = t\vec{h} + \vec{\alpha}_0$, $t\in\mathbb{R}$,
the Fisher information is given by the expected squared logarithmic
derivative of the probability distribution:
\begin{equation}\begin{split}
  \mathcal{F}_{\vec{\alpha}_0}(\vec{h}) 
              &= \sum_{\omega\in\Omega} P(\omega|\vec{\alpha}_0) 
                    \left(\frac{{\rm d}\ln P(\omega|\vec{\alpha})}{{\rm d}t} \Big\vert_{t=0} \right)^2 \\
              &= \sum_{\omega\in\Omega} \frac{1}{P(\omega|\vec{\alpha}_0)} 
                     \left(\frac{{\rm d} P(\omega|\vec{\alpha})}{{\rm d}t} \Big\vert_{t=0} \right)^2 \!,
\end{split}\end{equation}
so we are considering the visibility functional
\begin{equation}
  V_{\text{F}}[P] := \max_{\vec{\alpha}_0,\vec{h}} \mathcal{F}_{\vec{\alpha}_0}(\vec{h}),
\end{equation}
where $\vec{\alpha}_0$ varies over the whole space of phases, and $\vec{h}$
over a suitably bounded set of directions. 
Clearly, $V_{\text{F}}$ is regular and weakly affine.

The formula for the Fisher information, optimised over measurements
(and $\vec{\alpha}_0$, which w.l.o.g.~is $\vec{0}$, by the same reasoning as
in previous examples),
for estimating $t\approx 0$ in $e^{-itH}\rho e^{itH}$ for a given
diagonal Hamiltonian $H = \operatorname{diag}(\vec{h})$ 
and $\rho = \sum_j \lambda_j \proj{e_j}$ is known \cite{BraunsteinCaves,Paris:Fisher} 
and given by
\begin{equation}
  \label{eq:Fisher}
  \mathcal{F}_{\text{opt}}(\vec{h})
     = 2 \sum_{jk} \frac{(\lambda_j\!-\!\lambda_k)^2}{\lambda_j+\lambda_k}
                   |\bra{e_j} H \ket{e_k}|^2. 
\end{equation}

Like in the previous example on sensitivity, there are two natural domains
of diagonal Hamiltonians $H$ over which to optimise this: 
Either $\| H \|_2 \leq 1$
or $\| H \|_\infty \leq 1$,
leading to two variants $C^{(2)}_{\text{F}}(\rho)$ and $C^{(\infty)}_{\text{F}}(\rho)$
of the coherence measure.

In either case, the optimal choice of $H$ is extremal subject to the convex
constraint, because $\mathcal{F}$ can easily be seen to be convex in $H$.
Namely, each term $|\bra{e_j} H \ket{e_k}|$ is convex, hence also its square,
and the coefficient in front of it manifestly nonnegative.
Thus, we obtain:
\begin{align}
  C^{(2)}_{\text{F}}(\rho)      
          &= \max_{\substack{\sum_j t_j = 1 \\ \epsilon_j=\pm 1}}
              \phantom{:} \sum_{jk}\,
                       2\frac{(\lambda_j\!-\!\lambda_k)^2}{\lambda_j+\lambda_k}  \nonumber\\[-2mm]
          &\phantom{========:}
               \left|\bra{e_j}\!\!\left(\!\sum_j \!\epsilon_j\!\sqrt{t_j} \proj{j}\!\right)\!\!
                                                                    \ket{e_k}\right|^2\!\!\!,\\
  C^{(\infty)}_{\text{F}}(\rho) 
          &= \max_{S_+,S_- \subset [d]} 
              \sum_{jk} 2\frac{(\lambda_j\!-\!\lambda_k)^2}{\lambda_j+\lambda_k}  
            \bigl|\bra{e_j} (\Pi_+ \!-\! \Pi_-) \ket{e_k}\bigr|^2\!\!,
\end{align}
where the first maximisation is over diagonal Hamiltonians with Hilbert-Schmidt
norm $1$; the second over partitions $S_+ \stackrel{.}{\cup}S_- = [d]$, 
with $\Pi_{\bullet} = \sum_{j\in S_{\bullet}} \proj{j}$, $\bullet=\pm$,
so that $H = \Pi_+ - \Pi_-$.

\medskip
For a qubit state $\rho$, it can be verified (see Appendix) that
$C^{(2)}_{\text{F}}(\rho) = 4|\rho_{12}|^2 = C_{\ell_1}(\rho)^2$, 
$C^{(\infty)}_{\text{F}}(\rho) 
 = 2 C_{\ell_1}(\rho)^2$.

\medskip
\textit{E. Largest differential Chernoff bound:}
We observe that the attainability of the Fisher information presupposes 
access to many copies of the state and independent measurements,
in which setting the Fisher information gives the optimal scaling
of the mean squared estimation error with the number of copies. 
If we allow general collective measurements and at the same time
only want to distinguish pairs of nearby states optimally, we are led 
to the differential Chernoff bound \cite{Calsamiglia-et-al:Chernoff}:
While the Chernoff bound is defined as 
$\xi(\rho,\sigma) = \sup_{0\leq s \leq 1} - \ln \tr \rho^s\sigma^{1-s}$,
for states and probability distributions alike \cite{Calsamiglia-et-al:Chernoff,Chernoff}, 
it is known that 
\(
\frac{1}{{\rm d}t^2}{\rm d}^2\xi\bigl( P(\cdot|\vec{\alpha}_0),P(\cdot|\vec{\alpha}_0+{\rm d}t\vec{h})\bigr)
     =: {\rm d}_{\vec{h}}\xi^2
\)
defines the line element of a Riemannian metric on the parameter space. 
Thus we let
\begin{equation}
  V_{\partial\xi}[P] := \max_{\vec{\alpha}_0,\vec{h}} {\rm d}_{\vec{h}}\xi^2.
\end{equation}
As $V_{\partial\xi}$ is regular and weakly affine, we will obtain a strong
SIO monotone. Note that this would not work simply fixing a Hamiltonian,
as shown in \cite{Girolami:wrong,Luo:WY}.

The differential Chernoff bound, optimised over measurements,
for distinguishing $e^{-itH} \rho e^{itH}$ for $t\approx 0$ from $\rho$
in the many-copy regime,
with a diagonal Hamiltonian $H$ and $\rho = \sum_j \lambda_j \proj{e_j}$ is
again known \cite{Calsamiglia-et-al:Chernoff}, and given by
${\rm d}_H\xi^2 = \frac{1}{{\rm d}t^2}{\rm d}^2\xi\bigl( \rho,e^{-itH}\rho e^{itH} \bigr)$,
which evaluates to 
\begin{equation}\begin{split}
  \label{eq:diff-Chernoff}
  {\rm d}_H\xi^2
        &= \frac12 \sum_{jk} \left(\sqrt{\lambda_j}-\sqrt{\lambda_k}\right)^2 |\bra{e_j} H \ket{e_k}|^2 \\
        &= \frac12 \sum_{jk} \left( \lambda_j+\lambda_k-2\sqrt{\lambda_j\lambda_k}\right) 
                                                                            |\bra{e_j} H \ket{e_k}|^2 \\
        &= \tr \rho H^2 - \tr \sqrt{\rho} H \sqrt{\rho} H
         = - \frac12 \tr [\sqrt{\rho},H]^2,
\end{split}\end{equation}
the latter equalling the \emph{Wigner-Yanase skew information}, 
$I_{\text{WY}}(\rho,H)$ \cite{skew-info}.

Like in the previous two examples, there are two natural domains
of diagonal Hamiltonians $H$ over which to optimise this: 
Either $\| H \|_2 \leq 1$ or $\| H \|_\infty \leq 1$,
leading to two variants $C^{(2)}_{\partial\xi}(\rho)$ and $C^{(\infty)}_{\partial\xi}(\rho)$
of the coherence measure.
Again, ${\rm d}_H\xi^2$ is convex in $H$, thanks to convexity
of each term $|\bra{e_j} H \ket{e_k}|^2$,
and $\left(\sqrt{\lambda_j}-\sqrt{\lambda_k}\right)^2 \geq 0$.
Consequently, the optimal $H$ is extremal under the convex norm constraint.
For $p=\infty$, this means that the maximum is attained on a difference
of two diagonal projectors, $H=\Pi_+-\Pi_-$. For $p=2$,
however, we can say something even better, using
Lieb's concavity theorem \cite{Lieb:WYD-concavity}, which
says that for semidefinite $H$, the Wigner-Yanase skew information is 
convex in $H^2$, by writing $H = \sqrt{H^2}$.
In general, we split $H=H_+-H_-$ into positive and negative
parts, and find after some straightforward algebra that
\[
  I_{\text{WY}}(\rho,H) = I_{\text{WY}}(\rho,H_+) + I_{\text{WY}}(\rho,H_-)
                                          - 2 \tr \sqrt{\rho}H_+\sqrt{\rho}H_-,
\]
which by Lieb's theorem \cite{Lieb:WYD-concavity}
is jointly convex in $H_+^2$ and $H_-^2$. 
Thus we find that the optimal $H_+$ and $H_-$ must be proportional to 
rank-one projectors, resulting in the expression claimed for $C_{\partial\xi}^{(2)}(\rho)$.

\begin{align}
  C^{(2)}_{\partial\xi}(\rho) 
           &= \max_{j,k,t} \ 
                I_{\text{WY}}\!\left(\rho,\sqrt{t}\proj{j}-\sqrt{1\!-\!t}\proj{k}\right)\! , \\
  C^{(\infty)}_{\partial\xi}(\rho) 
           &= \max_{S_+,S_- \subset [d]} I_{\text{WY}}\bigl(\rho,\Pi_+-\Pi_-\bigr)  \nonumber \\
           &= \max_{S_+ \stackrel{.}{\cup} S_- = [d]} 4 \tr \sqrt{\rho}\Pi_+\sqrt{\rho}\Pi_- ,
\end{align}
where the first maximisation is over distinct basis states $j,k\in [d]$
and $0\leq t \leq 1$; the second over disjoint subsets $S_+$ and $S_-$ 
of $[d]$, with $\Pi_{\bullet} = \sum_{j\in S_{\bullet}} \proj{j}$,
$\bullet = \pm$.

\medskip
For a qubit state $\rho$, we find (see Appendix) that
$C^{(2)}_{\partial\xi}(\rho) = 2 \left|\left(\sqrt{\rho}\right)_{12}\right|^2$ and 
$C^{(\infty)}_{\partial\xi}(\rho) = 4 \left|\left(\sqrt{\rho}\right)_{12}\right|^2$.

\medskip
\textit{F. Largest Shannon information:}
The previous examples should have prepared us for thinking
of visibility as an expression of how much information about 
$\vec{\alpha}$ the output distribution $P(\cdot|\vec{\alpha})$ reveals. 
So why not take this to the logical conclusion? 
Noting that $P$ is a channel from multi-phases
$\vec{\alpha}$ to outputs $\omega$, in the Shannon theoretic sense, 
we are motivated to define visibility as the Shannon capacity of $P$:
\begin{equation}
  \label{eq:shannon-cap}
  V_I[P] := C(P) = \sup_{\mu} I(\vec{\alpha}:\omega),
\end{equation}
where $\mu$ is a probability measure on the $\vec{\alpha}$, defining
a joint distribution $\mu(\vec{\alpha})P(\omega|\vec{\alpha})$
of channel inputs and outputs, and $I(X:Y) = D(\mathbb{P}_{XY}\|\mathbb{P}_X\times\mathbb{P}_Y)$ 
is the mutual information of two random variables \cite{Shannon}.
It can be checked that $V_I$ is regular and weakly affine.
Operationally, $V_I[P]$ is the largest communication rate that
can be transmitted by a sender, who may encode information into the
phase settings $\vec{\alpha}^{(1)},\ldots,\vec{\alpha}^{(n)}$
of asymptotically many interferometers, to a receiver
who decodes the correct message with high probability based on
the observations $\omega_1,\ldots,\omega_n$ \cite{Shannon}.

To obtain $C_I(\rho)$, we then only need to perform a maximisation of
the Shannon capacity over all measurements:
\begin{equation}\begin{split}
  \label{eq:C_I}
  C_I(\rho)  = \sup_{(M_\omega)} C(P_{M|\rho})                     
            &= \sup_{\mu} \sup_{(M_\omega)} I(\vec{\alpha}:\omega) \\
            &= \sup_{\mu} I_{\text{acc}}\bigl(\bigl\{\mu(\vec{\alpha}),\rho(\vec{\alpha})\bigr\}\bigr),
\end{split}\end{equation}
where the latter quantity is know as the \emph{accessible information}.
These optimisations are by no means easy, and are worked out only in
some few cases. In any case, Theorem~\ref{thm:C_V:monotone} shows
that $C_I$ is a SIO monotone. This might provide some motivation to 
try to evaluate $C_I$ in certain special cases.

However, due to the Holevo bound \cite{Holevo:bound}, and 
the Holevo-Schumacher-Westmoreland theorem~\cite{HSW}
regarding the capacity of the cq-channel $\vec{\alpha} \mapsto \rho(\vec{\alpha})$,
we obtain the following:
\begin{equation}
  C_I(\rho) \leq S\bigl(\Delta(\rho)\bigr)-S(\rho) = C_r(\rho)
                                                   = \sup_n \frac1n C_I(\rho^{\ox n}).
\end{equation}

Namely, the Holevo bound \cite{Holevo:bound} upper-bounds the accessible information,
\[\begin{split}
  I_{\text{acc}}\bigl(\bigl\{\mu(\vec{\alpha}),\rho(\vec{\alpha})\bigr\}\bigr)
         &\leq \chi\bigl(\bigl\{\mu(\vec{\alpha}),\rho(\vec{\alpha})\bigr\}\bigr)   \\
         &\!\!\!\!\!\!\!\!\!\!
          :=   S\left(\int \! \mu({\rm d}\vec{\alpha}) \rho(\vec{\alpha})\right)
                      - \int \! \mu({\rm d}\vec{\alpha}) S\bigl( \rho(\vec{\alpha}) \bigr).
\end{split}\]
Here, the second term is always $S(\rho)$ because the $\rho(\vec{\alpha})$
are unitarily rotated versions of $\rho$, and the first term is maximised
by the uniform distribution over all phases:
\begin{equation}
  C_I(\rho) \leq S\bigl(\Delta(\rho)\bigr)-S(\rho) = C_r(\rho),
\end{equation}
with the well-known relative entropy of coherence \cite{Aaberg2006,BCP14}.
Note that the latter is known to be a monotone under IO, and even under the
still larger class of maximally incoherent operations (MIO) \cite{coherence-review}.

Invoking the Holevo-Schumacher-Westmoreland theorem~\cite{HSW}
regarding the capacity of the cq-channel $\vec{\alpha} \mapsto \rho(\vec{\alpha})$,
we get furthermore $\sup_n \frac1n C_I(\rho^{\ox n}) = C_r(\rho)$.

\medskip
In the qubit case, the optimisation (\ref{eq:C_I}) seems to be unknown,
but we believe that the maximum is attained on the binary ensemble 
$\left\{ (\frac12,\rho_0=\rho),\, (\frac12,\rho_1=\sigma_z\rho\sigma_z) \right\}$,
and the measurement in the eigenbasis of $\rho_0-\rho_1$, which would yield
$C_I(\rho) = 1-H\!\left(\!\frac{1\pm 2|\rho_{12}|}{2}\!\right) \approx \frac{2}{\ln 2} |\rho_{12}|^2$.
On the other hand, 
$C_r(\rho) = H\!\left(\!\frac{1\pm\tr\rho\sigma_Z}{2}\!\right) - H\!\left(\!\frac{1\pm r}{2}\!\right)$.

\bigskip
\textbf{Discussion.}---%
Using a simple model of multi-path interferometry and a broad approach
to visibility of an experimental setup of phase modulation and
detection, we showed that the concept of coherence of a state can be 
obtained by optimising the visibility over detection schemes.
We illustrated our approach by analysing specific visibility
functionals. The results are clearest in the two-level case,
corresponding to Mach-Zehnder interferometers, where we find that
the single off-diagonal density matrix element governs almost all 
visibility and coherence effects. In settings with more paths,
as should be expected, there are different inequivalent ways of
quantifying visibility and correspondingly many different, incomparable
coherence measures.

Our discussion shows that it is possible to link coherence theory,
a priori quite an abstract enterprise, to operational notions
in the physics of interferometers. We hope that our present approach
will be fruitful in the future to develop a firm physical foundation
of the resource theory of coherence. As an example of this kind of impact, we 
highlight Theorem~\ref{thm:C_V:monotone}, which shows that visibility
based coherence measures are naturally monotone under strictly
incoherent operations (SIO), while it is an open question whether 
this holds also under the originally proposed incoherent operations (IO);
this might be construed as favouring SIO over IO as the ``correct''
class of operations. See also Yadin \emph{et al.}~\cite{Yadin-etal},
where it is shown that SIO are obtained precisely as the class of
cptp maps that can be dilated in a specific incoherent way onto an
extended system $\mathcal{H}\otimes\mathcal{S}$, where the ``internal'' or 
``spin'' degrees of freedom of the particle are thought of as having no 
incoherence structure, so that in \AA{}berg's framework \cite{Aaberg2006} the
incoherent subspaces are $\ket{j}\otimes\mathcal{S}$. Namely, 
a cptp map is strictly incoherent if and only if it can be
decomposed into attaching an ancillary state of $\mathcal{S}$, followed
by an incoherent unitary on the tensor product space, i.e.~one mapping
the subspaces $\ket{j}\otimes\mathcal{S}$ into each other, followed
by a destructive measurement of $\mathcal{S}$ with outcomes $\lambda$.

Unlike other investigations that have tried to build a similar link
between visibility and coherence, we start from visibility parameters as 
a feature of experimentally accessible data, rather than declaring known 
coherence measures as ``visibility''~\cite{Manab-etal,Emili-etal}. 
Because of this we
think of our approach as operational, in contrast to the cited works
whose approach could be characterised as axiomatic. In this respect,
we believe that our present work goes some way towards answering the
call for an operational justification of coherence as a visibility 
parameter \cite[Sec.~VII]{Coles:coherent}. It is tempting to conjecture
that all the coherence parameters derived from ``reasonable'' visibility 
functionals satisfy duality relations with suitable path-information 
measures such as in the mentioned works. Whether visibility as 
conceptualised by us is always dual to a path-information or some other
parameter, is a question we have to leave open at this point.
In any case, our analysis of some concrete examples of visibility functionals
on interference fringes has bolstered this connection to coherence,
resulting in coherence measures that can be related to, and in
some cases identified with, previously considered measures. 

We also think that the present treatment gives some insight into the relationship
between coherence and the resource theory of asymmetry (or reference
frames) for the group of time translations, 
cf.~\cite{MarvianSpekkens-review}. Namely, looking at Examples
C, D, and E, each of the resulting coherence measures is obtained
by maximising, over a bounded set of diagonal Hamiltonians, a function
given in Eqs.~(\ref{eq:nabla}), (\ref{eq:Fisher}) and (\ref{eq:diff-Chernoff}), 
respectively. It is known that for a fixed Hamiltonian $H$, each of
them is an monotone in the resource theory of time asymmetry corresponding
to energy conservation, see for instance \cite{Girolami:wrong} for the 
latter quantity. 
Hence, we are led to think of coherence theory as asymmetry theory with
a Hamiltonian that has fixed eigenvectors but ``undetermined'' eigenvalues.
This may go some way towards explaining the characteristic similarities
and differences between (time) asymmetry and coherence.

In the analysis we encountered some interesting mathematical problems,
too, among them the characterisation of the set of all commutators
of norm-bounded diagonal and general Hermitian matrices.
Furthermore, we would like to know whether, among the established 
coherence monotones, we can recover the $\ell_1$-measure $C_{\ell_1}$ \cite{BCP14},
or the coherence of formation $C_f$ \cite{Aaberg2006,WinterYang}
directly via visibilities? In light of Refs.~\cite{Manab-etal,Emili-etal},
the former would be especially interesting.

Finally, going beyond the single-particle interference of our above
theory, the present study suggests multi-particle interference as
a natural extension. This will not only provide a framework for the
compositions of systems (cf.~\AA{}berg \cite{Aaberg2006}),
but also bring out the unique quantum features of interference, as opposed 
to the mere wave-mechanical ones in the single-particle case.

\bigskip
\textbf{Ethics statement.}---%
This work did not involve human embryos, nor any other human subjects, nor
data collection on humans.

\bigskip
\textbf{Data accessibility.}---%
This work does not have experimental data.

\bigskip
\textbf{Competing interests.}---%
The authors have no competing interests.

\bigskip
\textbf{Authors' contributions.}---%
All authors have contributed equally and crucially to the conception
of the present paper, the execution of the scientific research, and
its writing. All authors gave final approval for publication.

\bigskip
\textbf{Acknowledgments and funding statement.}---%
It is our pleasure to thank Emili Bagan, Manabendra Bera, John Calsamiglia
and Chang-Shui Yu 
for discussions on interferometers and visibility, and Patrick
Coles for illuminating remarks on an earlier version of the manuscript.

TB is supported by IISER Kolkata and acknowledges the hospitality of
the Quantum Information Group (GIQ) at UAB during June-July 2016, when
the present work was initiated.
MGD is supported by a doctoral studies fellowship of the
Fundaci\'on ``la Caixa''. 
AW is or was supported by the European Commission (STREP ``RAQUEL'')
and the ERC (Advanced Grant ``IRQUAT'').
The authors acknowledge furthermore funding by
the Spanish MINECO (grant FIS2013-40627-P),
with the support of FEDER funds, and by the 
Generalitat de Catalunya CIRIT, project 2014-SGR-966.




\appendix

\phantom{.}\vspace{6.66mm}

\begin{center}
  {\bf APPENDIX --- Qubit examples}
\end{center}

\paragraph{{\bf A.---$\mathbf{C_{\max}}$.}} As only the relative phase $\alpha = \alpha_1-\alpha_2$
matters, we see
\[
  \rho-U(\vec{\alpha})\rho U(\vec{\alpha})^\dagger 
     = \left[\begin{array}{cc}
             0 & (1-e^{-i\alpha})\rho_{12} \\
             (1-e^{+i\alpha})\rho_{21} & 0
             \end{array}\right].
\]
Its trace norm clearly is maximised at $\alpha=\pi$, showing
$C_{\max}(\rho) = |\rho_{12}| + |\rho_{21}| = C_{\ell_1}(\rho)$,
which for qubits is known to equal $2 C_{\tr}(\rho)$.

\medskip
\paragraph{{\bf C.---$\mathbf{C_{\nabla}}$.}} 
For $p=\infty$, the only nontrivial choice is $\Pi_+=\proj{1}$ and 
$\Pi_-=\proj{2}$, directly resulting in 
$C_{\nabla}^{(\infty)} = 2 \bigl\| \proj{1}\rho\proj{2} \bigr\|_1 = 2 |\rho_{12}|$.

For $p=2$, we have to consider the Hamiltonian
$H = \sqrt{t}\proj{1}\pm\sqrt{1-t}\proj{2}$, yielding
\[
  [\rho,H] = \left[\begin{array}{cc}
                   0 & (-\sqrt{t}\pm\sqrt{1-t})\rho_{12} \\
                   (\sqrt{t}\mp\sqrt{1-t})\rho_{21}  & 0
                   \end{array}\right].
\]
Its trace norm is maximised for the negative
sign choice and at $t=\frac12$, and so 
$C_{\nabla}^{(2)} = \sqrt{2} |\rho_{12}|$

\medskip
\paragraph{{\bf D.---$\mathbf{C_{\text{F}}}$.}}
The formula for the coherence measure reduces to
\[
  C_{\text{F}}^{(p)}(\rho) 
    = \max 2\frac{(\lambda_1-\lambda_2)^2}{\lambda_1+\lambda_2} |\bra{e_1}H\ket{e_2}|^2,
\]
where the maximisation is over $H\in\operatorname{span}\{\1,\sigma_Z\}$
such that $\| H \|_p \leq 1$. Note that $\lambda_1+\lambda_2=1$ and
$|\bra{e_1}H\ket{e_2}|^2 = \tr H \proj{e_1} H \proj{e_2}$.

This calculation is conveniently done in the Bloch picture,
writing $\rho = \frac12(\1 + \vec{r}\cdot\vec{\sigma})$, with a vector 
$\vec{r} = r\,\vec{r}^0$ that we decompose as a product of its
length $r=|\vec{r}|$ and a unit vector $\vec{r}^0$ (with components
$r^0_x$, $r^0_y$ and $r^0_z$). In this way the eigenprojectors of $\rho$ become
$\proj{e_{1,2}} = \frac12(\1\pm\vec{r}^0\cdot\vec{\sigma})$.
In the above maximisation, this allows us to identify $\lambda_1-\lambda_2=r$
and $r^2 = 2\tr\rho^2-1$.

For $p=\infty$, we already know that $H=\sigma_Z$ is optimal, so
\[\begin{split}
  C_{\text{F}}^{(\infty)}(\rho) 
           &= 2r^2 \tr \sigma_Z \proj{e_1} \sigma_Z \proj{e_2} \\
           &= 2r^2\frac14\tr (\1-r^0_x\sigma_X-r^0_y\sigma_Z+r^0_z\sigma_Z) \\[-2mm]
           &\phantom{=====:::}
                             \cdot\!(\1-r^0_x\sigma_X-r^0_y\sigma_Z-r^0_z\sigma_Z) \\
           &= r^2\bigl( 1+(r^0_x)^2+(r^0_y)^2-(r^0_z)^2 \bigr) \\
           &= 4\bigl(\tr\rho^2-\tr\Delta(\rho)^2\bigr) \\
           &= 8|\rho_{12}|^2 = 2 C_{\ell_1}(\rho)^2.
\end{split}\]

For $p=2$, the maximisation reduces to that of
$2r^2 \tr H\proj{e_1}H\proj{e_2}$,
with $H=\alpha\1+\beta\sigma_Z$ and $2\alpha^2+2\beta^2\leq 1$. The
trace decomposes into four terms, however the three that contain a $\alpha\1$
evaluate to $0$, leaving $2\beta^2r^2\tr \sigma_Z\proj{e_1}\sigma_Z\proj{e_2}$,
which yields (using the optimal choice $2\beta^2=1$)
$C_{\text{F}}^{(2)}(\rho) = 2(\tr\rho^2-\tr\Delta(\rho)^2) = C_{\ell_1}(\rho)^2$.

\medskip
\paragraph{{\bf E.---$\mathbf{C_{\partial\xi}}$.}}
For $p=\infty$, the only nontrivial choice is $\Pi_+=\proj{1}$ and 
$\Pi_-=\proj{2}$, directly resulting in 
$C_{\partial\xi}^{(\infty)} = 4 \tr \proj{1}\sqrt{\rho}\proj{2}\sqrt{\rho} 
 = 4 \left|\left(\sqrt{\rho}\right)_{12}\right|^2$.

For $p=2$, we have to consider the Hamiltonian 
$H = \sqrt{t}\proj{1}-\sqrt{1-t}\proj{2}$, yielding
\[
  [\sqrt{\rho},H] = \left[\begin{array}{cc}
                            0 & (-\sqrt{t}-\sqrt{1-t})(\sqrt{\rho})_{12} \\
                            (\sqrt{t}+\sqrt{1-t})(\sqrt{\rho})_{21}  & 0
                          \end{array}\right].
\]
Thus,
$I_{\text{WY}}(\rho,H) = \left(\sqrt{t}+\sqrt{1-t}\right)^2 
                           \left|\left(\sqrt{\rho}\right)_{12}\right|^2$,
which is maximised at $t=\frac12$, hence
$C_{\partial\xi}^{(2)} = 2 \left|\left(\sqrt{\rho}\right)_{12}\right|^2$.


\end{document}